\newcommand{\appref}[1]{\hyperref[#1]{{Appendix~\ref*{#1}}}}
\newcommand{\be}{\begin{eqnarray} \begin{aligned}}
\newcommand{\ee}{\end{aligned} \end{eqnarray} }
\newcommand{\benn}{\begin{eqnarray*} \begin{aligned}}
\newcommand{\eenn}{\end{aligned} \end{eqnarray*}}
\newcommand*{\cA}{\mathcal{A}} 
\newcommand*{\cB}{\mathcal{B}}
\newcommand*{\cE}{\mathcal{E}}
\newcommand*{\cL}{\mathcal{L}}
\newcommand*{\cN}{\mathcal{N}}
\newcommand{\bc}{\begin{center}}
\newcommand{\ec}{\end{center}}
\newcommand{\Tr}{\mathop{\mathrm{tr}}\nolimits}
\newtheorem{theorem}{Theorem}[section]
\newtheorem{lemma}[theorem]{Lemma}
\def\01{\{0,1\}}
\begin{document}

\title{Limits on classical communication from quantum entropy power inequalities}

\author{Robert K\"onig}
\affiliation{IBM TJ Watson Research Center 1101 Kitchawan Road, Yorktown Heights, NY 10598} 
\author{Graeme Smith}
\affiliation{IBM TJ Watson Research Center 1101 Kitchawan Road, Yorktown Heights, NY 10598} 

\date{\today}

\begin{abstract}
Almost all modern communication systems rely on electromagnetic fields as a means of information transmission, and 
finding the capacities of these systems is a problem of significant practical importance.  The Additive White Gaussian Noise (AWGN)
channel is often a good approximate description of such systems, and its capacity is given by a simple formula.  
However, when quantum effects are important, estimating the 
capacity becomes difficult: a lower bound is known, but a similar upper bound is missing.  
Here we present strong new upper bounds
for the classical capacity of quantum additive noise channels, including quantum analogues of the AWGN channel.  
Our main technical tool is a quantum
entropy power inequality  that controls the entropy production as two quantum 
signals combine at a beam splitter.  Its 
proof involves a new connection between entropy production rates and a quantum Fisher information, 
and uses a quantum diffusion that smooths arbitrary states towards gaussians. 
\end{abstract}

\maketitle

\section{Introduction and statement of results}

Channel capacity is central to Shannon's information theory \cite{Shannon48}.  Operationally, it is the 
maximum achievable communication rate, measured in bits per channel use.  Mathematically, it 
is the maximum correlation that can be generated with a single use of the channel, with correlation measured by the mutual 
information.  Practically, it is an optimal measure against which to compare the performance of real systems.

Understanding the impact of quantum effects on channel capacity has been  an important question 
since the early days of information theory \cite{Pierce73}.  The HSW theorem shows 
that the Holevo information, $\chi(\cN) = \max_{\{p_x,\phi_x\}} S(\cN(\bar{\phi}))- \sum_x p_x S(\cN(\phi_x))$, is 
a lower bound for the classical capacity of a quantum channel \cite{SW97,Holevo98}.  
Here $\bar{\phi} = \sum_x p_x \phi_x$ is the average signal state and  $S(\rho) = -\Tr(\rho{\log_2} \rho)$ is
the von Neuman entropy of $\rho$.  The Holevo information is the maximum rate of a code that uses unentangled signal states.  However, using
entangled states, it is sometimes possible to exceed $\chi$ \cite{Hastings09}.

The thermal noise channel, $\cE_{\lambda, N_E}$, 
is a natural quantum analogue for the AWGN channel \cite{Shannon48}, and as such is a good description 
of many practical systems ( Fig. \ref{Fig:ThermalDefinition}).  The HSW lower bound for the capacity of a thermal noise channel with 
average signal photon number $N$ is 
\begin{align}\label{Eq:Holevo}
C(\cE_{\lambda,N_E},N) &\geq \left[ g\left(\lambda N + (1-\lambda)N_E\right)- g\left((1-\lambda)N_E\right)\right]/\ln 2 ,
\end{align}
where $g(x) = (x+1)\ln(x+1) - x\ln x$ \cite{HW01,EW05}.  This communication rate is achievable with a simple classical modulation 
scheme of displaced coherent states \cite{GGLMSY04}, and  exceeding it would require entangled modulation schemes. Our 
goal is to explore the 
usefulness of such novel, fundamentally quantum, strategies.  We find tight bounds on any possible strategy for exceeding Eq.~(\ref{Eq:Holevo})
for a wide range of parameters ( Fig. \ref{Fig:Plots}).  We show such strategies are essentially useless for $\lambda = \frac{1}{2}$ .
Overall, we find for a wide range of practical channels that 
good old classical modulation of coherent states can't be substantially improved upon with quantum tricks.

\section{Entropy Power Inequality}
Entropy is a central quantity in discussions of capacity.
Understanding its properties will be crucial.  For a real variable $X$ with probability density
$p(x)$, the entropy is $H(X)= -\int p(x){\log_2} p(x) dx$.  $H(X)$ measures the information
contained in $X$ and appears in Shannon's formula for the capacity of a noisy channel.  The entropy power,
 $ \frac{1}{2\pi e}e^{2H(X)}$, was considered by Shannon in the context of additive noise 
channels \footnote{An additive noise channel adds independent noise $Y$ to input signal $X$, resulting in output signal $X+Y$.}.  
He proposed that the Entropy Power Inequality (EPI), 

\begin{align}\label{Eq:ShannonEPI}
e^{2H(X+Y)} \geq e^{2H(X)} + e^{2H(Y)},
\end{align}
controlled the entropy production as two statistically independent signals are combined.  Shannon's arguments were incomplete, 
but a full proof of the EPI was given by Stam \cite{Stam59} and Blachman \cite{Blachman65}.  Generalizations of 
the EPI have been found, and recently there
has been renewed interest in streamlining their proofs \cite{VG06,rioultwo}. 
EPIs are a fundamental tool in information theory, 
crucial for bounding capacities of noisy channels in various scenarios \cite{Shannon48,Bergmans74,Cheong78}.  While Eq.~(\ref{Eq:ShannonEPI}) is the most 
commonly cited form, there are several equivalent statements \cite{Demboetal91}.  The following formulation will be most convenient:
\begin{align}\label{Eq:LiebEPI}
H(\sqrt{\lambda}X + \sqrt{1-\lambda}Y) \geq \lambda H(X) + (1-\lambda)H(Y) \ \  {\rm for } \ \ \lambda \in [0,1]. 
\end{align}

A single mode of an electromagnetic field can be described in 
terms of its field quadratures, $P$ and $Q$. When independent modes X and Y with quadratures $(Q_X,P_X)$ and $(Q_Y,P_Y)$ 
are combined at a beam splitter of transmissivity $\lambda$ ( Fig. \ref{Fig:BS}), the signal in one output mode is given
by $(\sqrt{\lambda}Q_X + \sqrt{1-\lambda}Q_Y,\sqrt{\lambda}P_X + \sqrt{1-\lambda}P_Y)$, a process which we denote $X \boxplus_{\lambda}Y$.
Our main result is a quantum analogue of Eq.~(\ref{Eq:LiebEPI}) adapted to this setting, namely, 

\begin{align}\label{Eq:LiebQEPI}
S(X \boxplus_{\lambda} Y) \geq \lambda S(X) + (1-\lambda)S(Y),
\end{align} 
for any independent states on $X$ and $Y$.  This inequality applies unchanged when $X$ and $Y$ are $n$-mode systems.
Here $S(X) = -\Tr \rho_X {\log_2} \rho_X$ is the von Neuman entropy of the state of system $X$, $\rho_X$, with $S(Y)$ 
and $S(X \boxplus_{\lambda} Y)$ defined similarly.  While Eq.~(\ref{Eq:ShannonEPI}) and Eq.~(\ref{Eq:LiebEPI}) are classically equivalent, 
the analogous quantum inequalities do not seem to be.  So, in addition to Eq.~(\ref{Eq:LiebQEPI}), we also prove a quantum analogue
of Eq.(\ref{Eq:ShannonEPI}), valid for beam splitters of transmissivity $1/2$: 
\begin{align}\label{Eq:ShannonQEPI}
e^{\frac{1}{n}S(X \boxplus_{1/2}Y)} \geq \frac{1}{2}e^{\frac{1}{n}S(X)}+ \frac{1}{2}e^{\frac{1}{n}S(Y)}.
\end{align}
 Below we outline a proof of Eq.~(\ref{Eq:LiebQEPI}) and Eq.~(\ref{Eq:ShannonQEPI}), and explore their implications for the classical 
capacity of additive quantum channels.

\begin{figure}[htbp]
\includegraphics[width=3in]{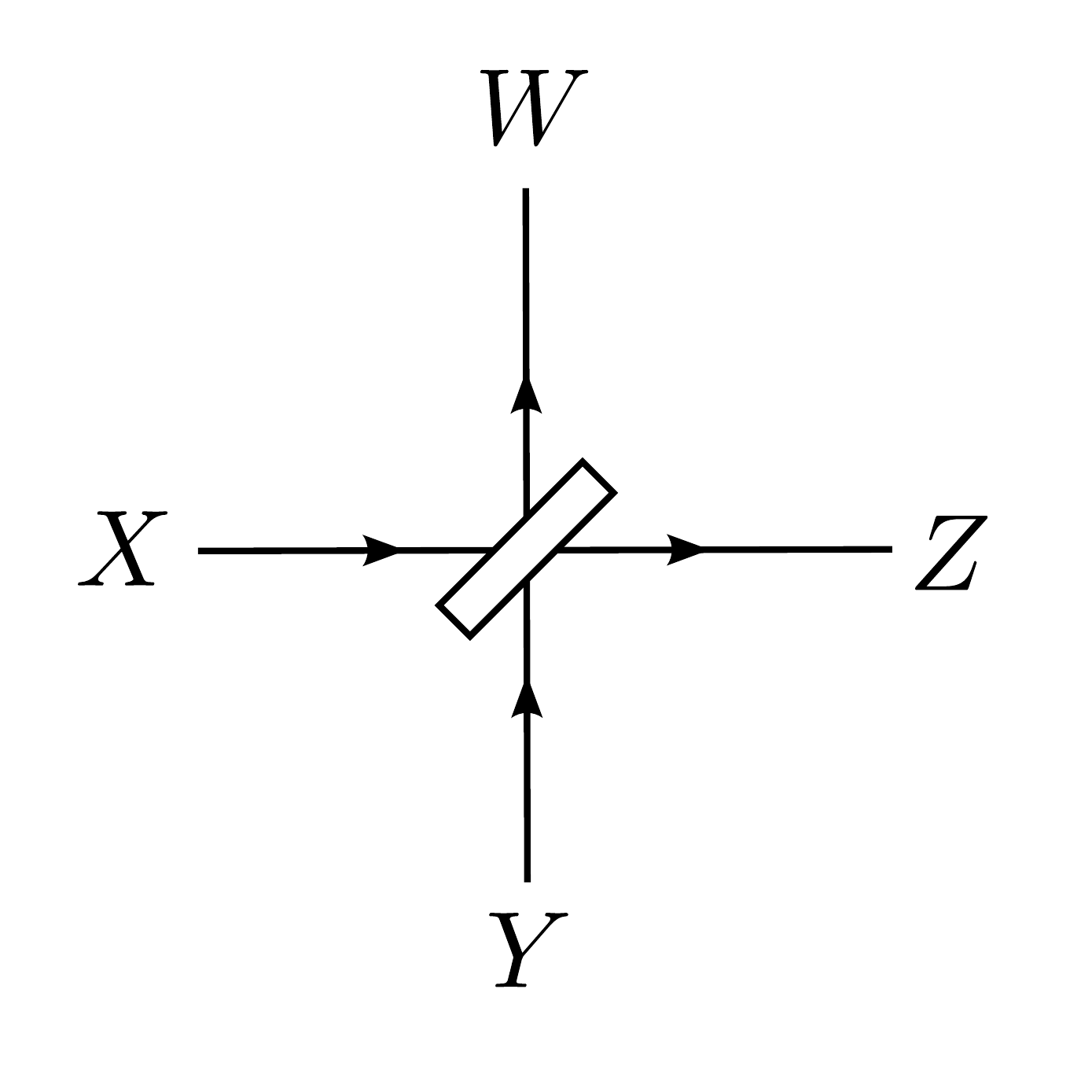}
\caption{Two independent quantum signals combined at a beam splitter.  Both $X$ and $Y$ are $n$-mode systems with quadratures  
$ {R}^X = (Q_1^X, P_1^X, \dots, Q_n^X,P_n^X)$ and $ {R}^Y = (Q_1^Y, P_1^Y, \dots, Q_n^Y,P_n^Y)$.  The output $Z$, which we
denote $X\boxplus_{\lambda}Y$, has quadratures $ R^Z = \sqrt{\lambda} {R}^X +\sqrt{1-\lambda} {R}^Y$, while the quadratures of 
$W$ are $ R^W = \sqrt{\lambda} {R}^X -\sqrt{1-\lambda} {R}^Y$.  Our main technical result is a proof that no matter what product state is 
prepared on $X$ and $Y$, the beam splitter always increases entropy: $S(Z)\geq \lambda S(X)+(1-\lambda)S(Y)$.  For $\lambda = 1/2$, we prove the
stronger constraint, Eq.(\ref{Eq:ShannonQEPI}).  These fundamental inequalities are the natural quantum generalization of the 
two classically equivalent entropy power inequalities Eq.(\ref{Eq:ShannonEPI}) and Eq.(\ref{Eq:LiebEPI}), and lead to strong new 
upper bounds on the classical communication capacity of additive bosonic channels.}
\label{Fig:BS}
\end{figure}

\section{Applications to classical capacity}

\begin{figure}[htbp]
\includegraphics[width=3in]{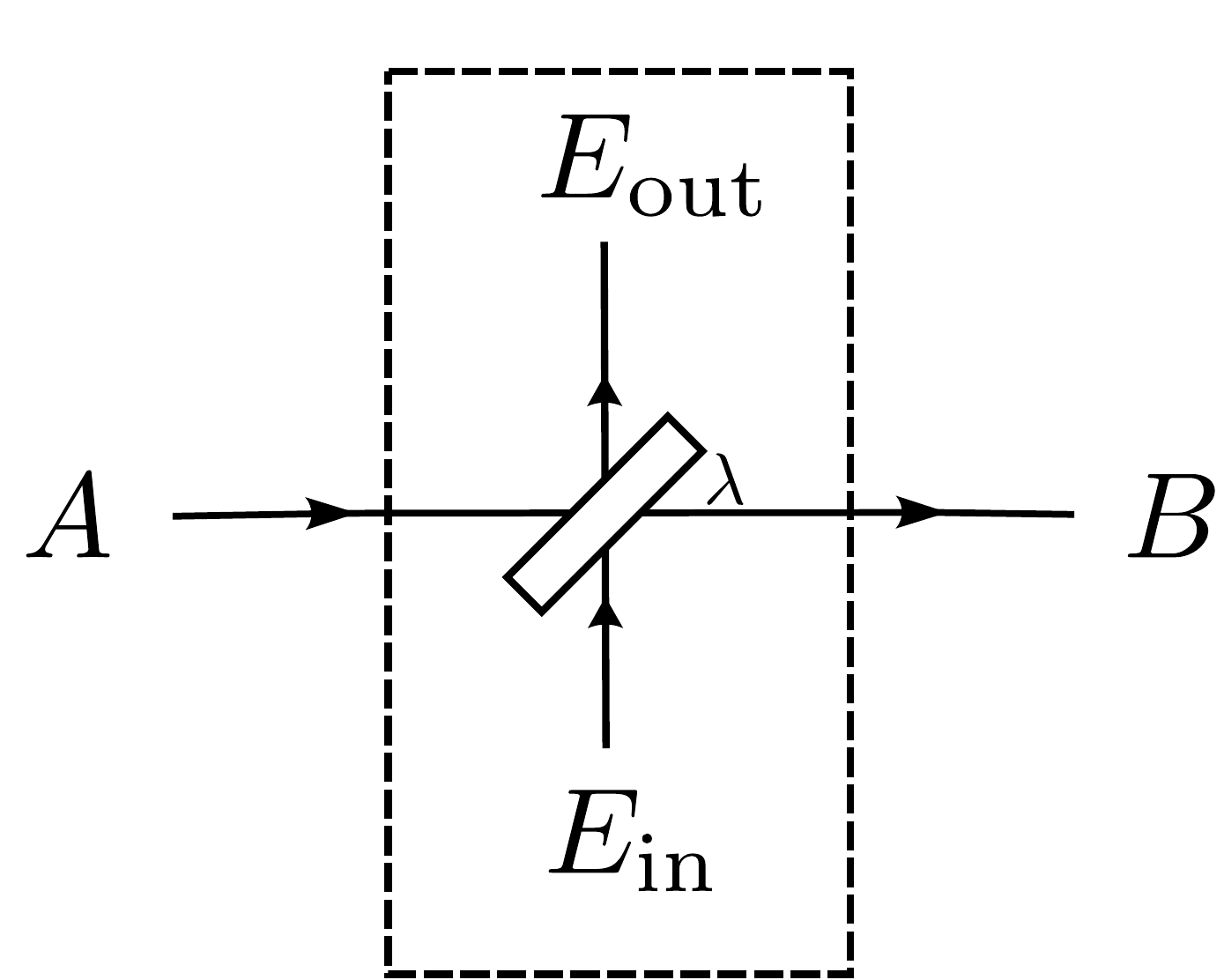}
\caption{An additive noise channel arises when an input signal $A$ interacts via beam splitter with initial environment $E_{\rm in}$ 
followed by a partial trace over  $E_{\rm out}$ resulting in an output signal $B$.  In general, the state of $E_{\rm in}$
can be arbitary.  If $E_{\rm in}$ is in a thermal state with average photon number $N_E$ and the transmissivity of the beam splitter
is $\lambda$, we say we have a thermal noise channel, $\cE_{\lambda,N_E}$.  Letting $\lambda \rightarrow 1$ and $N_{E}\rightarrow\infty$, 
while holding $(1-\lambda) N_E = \nu$ gives is the classical noise channel, which acts 
as $\cE_{\nu}(\rho) = \frac{1}{8\pi \nu}\int d^2\xi W_\xi \rho W_\xi^\dagger e^{-\frac{\xi^T\xi}{8\nu}}$. }
\label{Fig:ThermalDefinition}
\end{figure}

\begin{figure}

\subfigure[]{
\includegraphics[width=3in]{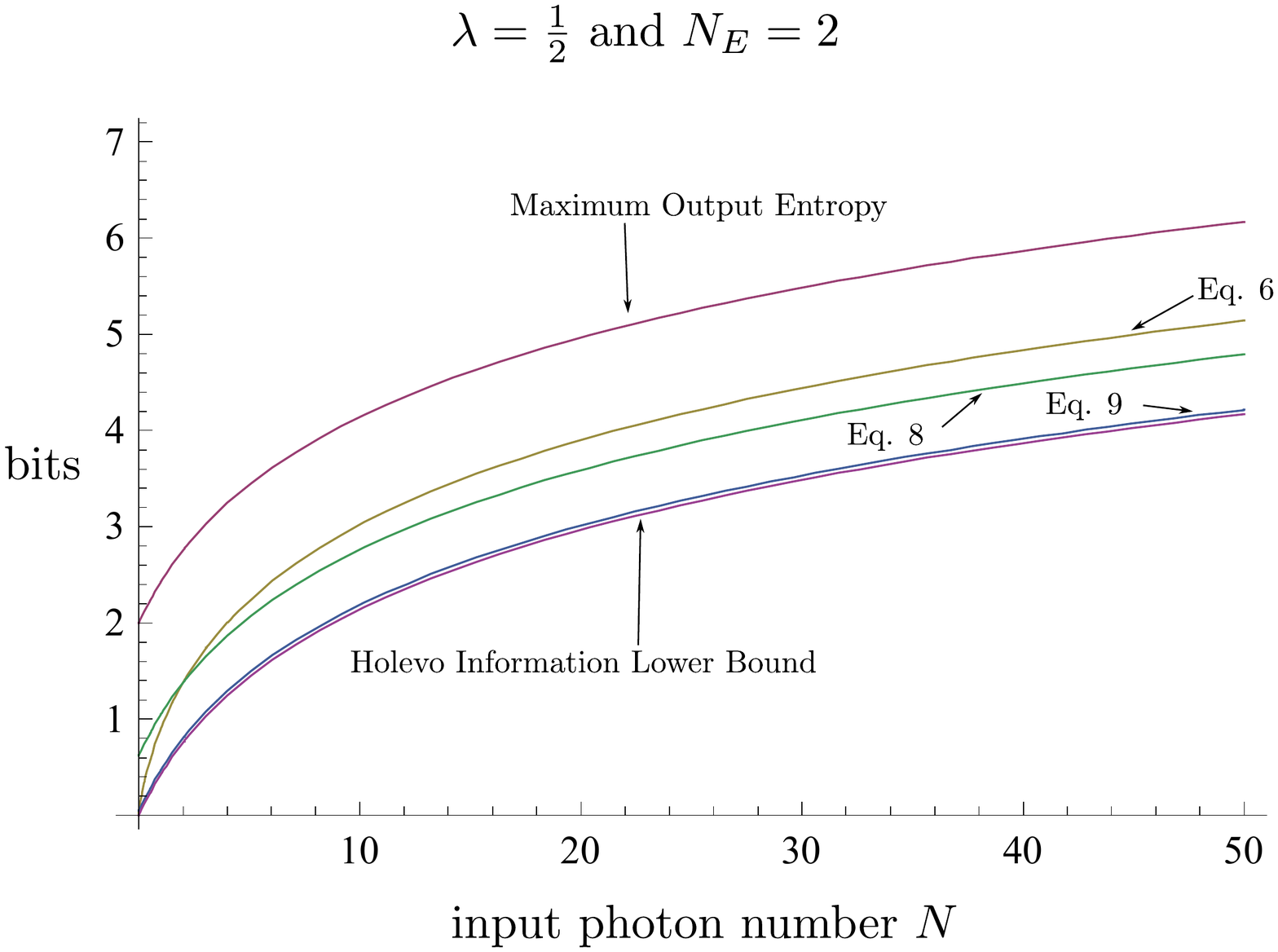}
}

\subfigure[]{
\includegraphics[width=3in]{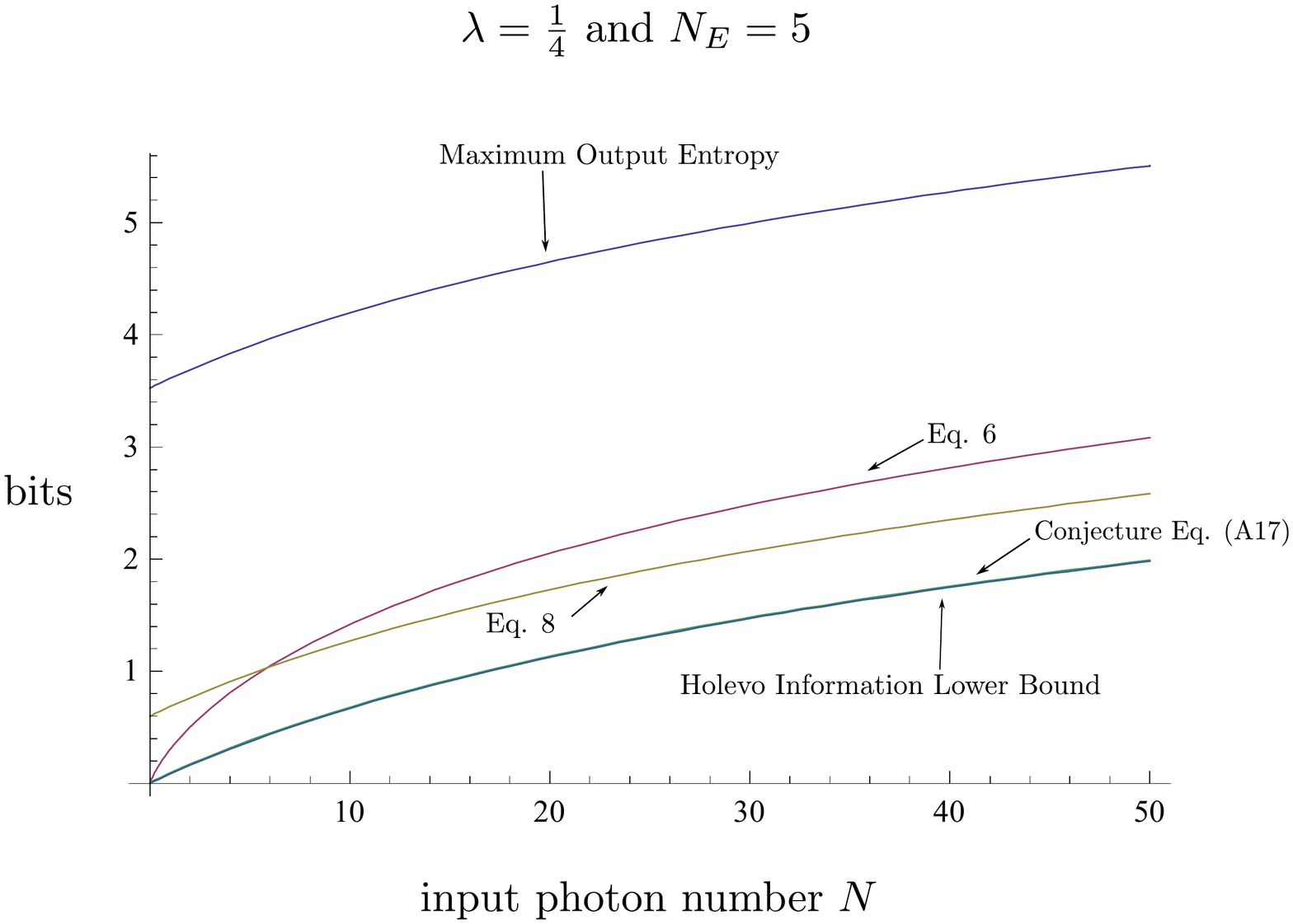}
}
\subfigure[]{
\includegraphics[width=3in]{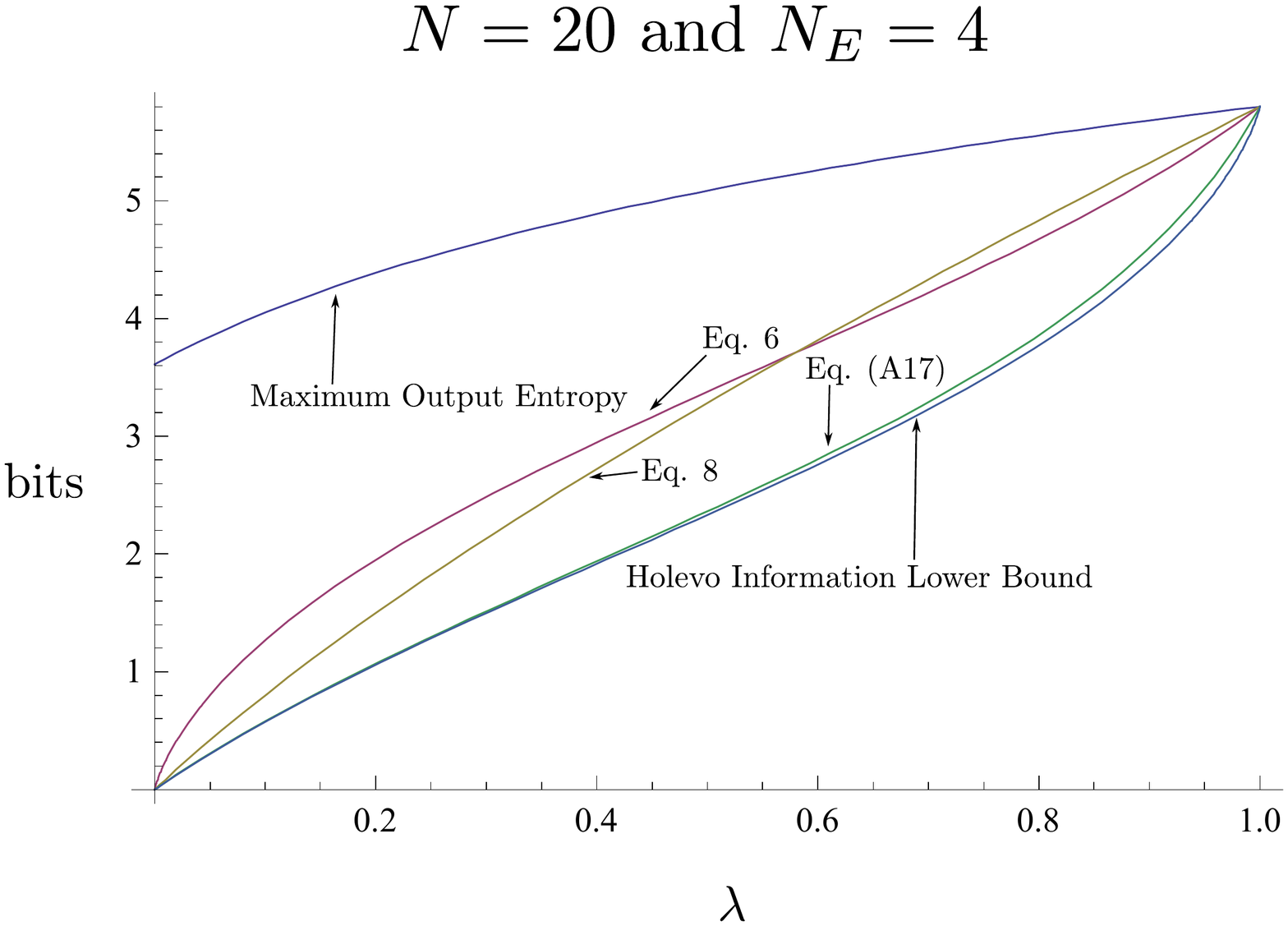}
}

\caption{Known bounds on the classical capacity of thermal noise channels.    Our results show 
that for a general additive noise channel $\cE$ with transmissivity  $\lambda$ 
the capacity with mean signal photon number $N$ satisfies $C(\cE,N) \leq S_{\rm max}(\cE,N) - (1-\lambda)S(E_{\rm in})$ (Eq.~(\ref{Eq:LiebUB})), 
while for general 
thermal noise channel $C(\cE_{\lambda, N_E},N) \leq \left[g(\lambda N+ (1-\lambda)N_E) - (1-\lambda)g(N_E)\right] \frac{1}{\ln 2}$ (Eq.~(\ref{Eq:AE})).  
For a thermal noise channel 
with $\lambda = 1/2$ we 
have $C(\cE_{1/2,N_E})  \leq \left[g\left(\frac{1}{2}(N +N_E)\right) -\ln \left( 1 + e^{g(N_E)}\right)\right] \frac{1}{\ln 2} +  1$, which is
Eq.~(\ref{Eq:ShannonUB}).  
Our conjecture Eq.~(\ref{Eq:ConjPower}) would 
imply $C(\cE_{\lambda, N_E},N) \leq \left[ g(\lambda N+ (1-\lambda)N_E) - \ln(\lambda + (1-\lambda)e^{g(N_E)})\right] \frac{1}{\ln 2}$ for 
general thermal noise channels. 
We compare these bounds to 
the only other known bounds for the capacity of these channels:  the output entropy upper bound and the 
Holevo Information lower bound (both can be found in \cite{HW01}). In (a) we plot these for $\lambda = \frac{1}{2}$ and $N_E = 2$, 
while (b) shows $\lambda = \frac{1}{4}$ and $N_E = 5$.  On the scale of plot (b), 
the conjectured bound Eq.(A17) is indistinguishable from the Holevo information lower bound.  Plot (c) shows capacity bounds for fixed environment
photon number and signal photon number, as a function of channel transmissivity.  }
\label{Fig:Plots}

\end{figure}

Before proving our entropy power inequalities, we consider their
implications for the classical capacity of a thermal noise channel with average thermal 
photon number $N_E$ and transmissivity $\lambda$, $\cE_{\lambda,N_E}$ ( Fig. \ref{Fig:ThermalDefinition}).  Eq.~(\ref{Eq:Holevo}) is the 
best known achievable rate for classical communication over this channel with  average signal photon number $N$ \cite{HW01}.  
In general the capacity exceeds the Holevo information, 
which corresponds to an enhanced communication capability from entangled signal states \cite{Hastings09}.  However, 
for the pure loss channel, $\cE_{\lambda,0}$, the bound is tight, 
giving capacity $C(\cE_{\lambda,0}) = g(\lambda N)/\ln 2$ \cite{GGLMSY04}.  Using the method of 
additive extensions \cite{AE08} gives the upper bound
\begin{align}\label{Eq:AE}
C(\cE_{\lambda,N_E},N)\leq g\left(\frac{\lambda N}{(1-\lambda)N_E +1}\right)\frac{1}{\ln 2}.
\end{align}
While Eq.~(\ref{Eq:AE}) follows from an elementary argument (see Appendix), as far as we know it is new.

Closely related to capacity, the minimum output entropy is a measure of a channel's noisiness  \cite{KR01,Shor04}. 
 Indeed, the classical capacity of any channel $\cE$ satisfies
 \begin{align}
 C(\cE,N) \leq S_{\rm max}(\cE,N) - \lim_{n \rightarrow \infty}\frac{1}{n}S_{\rm min}(\cE^{\otimes n}), 
 \end{align}
where $S_{\rm max}(\cE,N) = \max_{\Tr[H\rho]\leq 2N+1}S(\cE(\rho))$ is the maximum output entropy with photon number constraint $N$ 
($H = \frac{1}{2}[P^2+Q^2]$ is the harmonic oscillator Hamiltonian and $(H-1)/2$ is the number operator), $S_{\rm min}(\cE) = \min_{\rho}S(\cE(\rho))$
is the minimum output entropy, and $\cE^{\otimes n}$ is the $n$-fold tensor product representing $n$ parallel uses of the channel.  
The difficulty in applying this upper bound is the infinite limit in the second term, which prevents us from evaluating the right
hand side.  However, for additive noise channels our EPIs give lower bounds\footnote{In jargon,
we prove an additive lower bound on $S_{\rm min}(\cE^{\otimes n})$ giving a single-letter expression. } on 
$S_{\rm min}(\cE^{\otimes n})$, allowing simple upper bounds on the capacity.  In particular, from Eq.~(\ref{Eq:LiebQEPI}) 
we find the capacity of a thermal noise channel with environment photon 
number $N_E$ and signal photon number $N$ satisfies
\begin{align}\label{Eq:LiebUB}
C(\cE_{\lambda,N_E},N) \leq \left[ g\left(\lambda N + (1-\lambda)N_E\right) -(1-\lambda)g(N_E)\right] \frac{1}{\ln 2},
\end{align}
while for $\lambda = \frac{1}{2}$ Eq.~(\ref{Eq:ShannonQEPI}) implies the stronger
\begin{align}\label{Eq:ShannonUB}
C(\cE_{1/2,N_E},N) & \leq \left[g\left(\frac{1}{2}(N +N_E)\right) -\ln \left( 1 + e^{g(N_E)}\right)\right]\frac{1}{\ln 2} +  1.
\end{align}
This bound differs from the Holevo lower bound by no more than $0.06$ bits (Fig. \ref{Fig:Plots}).

\section{Divergence-based Quantum Fisher Information}

Fisher information is a key tool in the proof of the classical EPI\cite{Stam59,Blachman65}, however there is no unique quantum 
Fisher information \cite{Petz02}.  We introduce a particular quantum Fisher information defined in terms of the quantum divergence,
$S(\rho\|\sigma) = \Tr \rho ({\log_2} \rho - {\log_2} \sigma)$. 
Given a smooth family of states $\rho_{\theta}$, we define the divergence-based quantum Fisher information as the second derivative of divergence along the path:
\begin{align}
J(\rho_{\theta}; \theta )|_{\theta = \theta_0} = \partial_{\theta}^2 S(\rho_{\theta_0}\| \rho_{\theta})|_{\theta= \theta_0}.
\end{align}
This is nonnegative ($J(\rho_\theta;\theta) \geq 0$), 
additive ($J(\rho_{\theta}^A \otimes \rho_{\theta}^B; \theta) = J(\rho_{\theta}^A; \theta)+J(\rho_{\theta}^B; \theta)$) and satisfies
data processing ($J(\cE(\rho_{\theta}); \theta) \leq J(\rho_\theta; \theta)$ for any physical map $\cE$).  It also satisfies the
reparametrization formulas, $J(\rho_{c\theta}; \theta)|_{\theta=0} = c^2J(\rho_{\theta}; \theta)|_{\theta = 0}$ 
and $J(\rho_{\theta+c};\theta)|_{\theta=0} = J(\rho_{\theta};\theta)|_{\theta=c}$  \cite{KS12}.

\section{Quantum Diffusion}
Fisher information appears in the classical EPI proof because of its 
relation to the entropy production rate under the addition of gaussian noise via the de Bruijin identity, 

\begin{align}
\frac{dH(X+\sqrt{t}Z)}{dt}|_{t=0} = \frac{1}{2}J(X).
\end{align}
Here $X$ is an arbitrary variable, $Z$ is an independent normal variable with unit variance and $J(X)$ is the classical Fisher information
of the ensemble $\{X+\theta\}_{\theta \in \mathbb{R}}$.  The variable $X+\sqrt{t}Z$ arises from a diffusion with
initial state $X$ running for time $t$.

To explain our quantum de Bruijin identity, we must first discuss quantum diffusion processes.  A quantum Markov process is associated
with a Liouvillean  $ \cL(\rho)$ and governed by a Markovian master equation,

\begin{align}
\frac{d\rho}{dt} = \cL(\rho).
\end{align}
Our process of interest has $\cL(\rho) = -\frac{1}{4}\sum_{i}[R_i,[R_i,\rho]]$, and corresponds to adding 
gaussian noise in phase space \cite{Hall00} (see Fig. \ref{Fig:BS} for definitions of the quadratures $R_i$). 
 We denote the action of running this process
for time $t$ on initial state $\rho_0$ by $e^{\cL t}(\rho_0)$.  We 
want to relate the entropy production rate of our quantum diffusion to a Fisher information, 
but what ensemble should we use?  We choose $2n$ separate ensembles of states, 

\begin{align}
\rho_{\theta}^{R_i} & = D_{R_i}(\theta)\rho_0 D_{R_i}^\dagger(\theta) \label{Eq:EnsembleDefinition}
\end{align}
where $D_{R_i}$ is a displacement operator along the $R_i$ axis in phase space.
We then find for sufficiently smooth $\rho_0$ that

\begin{align}\label{Eq:QuantumDeBruijin}
\frac{dS(e^{t\cL}(\rho_0))}{dt}|_{t=0} = \sum_{i=1}^{2n}J(\rho_{\theta}^{R_i}; \theta) =: \tilde{J}(\rho_0).
\end{align}

The smoothness requirements are necessary because in infinite dimensions moving a derivative inside the trace of a function 
(easily justified by linearity in finite dimensions) is only possible if the function is sufficiently smooth.
To avoid
excessive technicalities and focus on the main thrust of our arguments we 
simply assume the required smoothness.  This entails little loss of generality, since the entropy on states of bounded energy 
is continuous \cite{Wehrl78}, so 
one can hope to approximate non-smooth functions with smooth ones to obtain the desired result.  
Because of this, such smoothness requirements are rarely considered in proofs of the classical EPI 
\cite{Stam59,Blachman65,rioultwo,VG06} or considerations of its quantum counterparts \cite{Giovannettietal04, Giovannetti10}
\footnote{Indeed, while the proof of the classical EPI is generally attributed to 
Stam \cite{Stam59} and Blachman \cite{Blachman65}, a full consideration justifying the interchange of derivatives
and integrals seems to have first been given by Barron in 1984\cite{Barron84}. }.  

\section{Proof of quantum entropy power inequality}

\begin{figure}[htbp]
\includegraphics[width=6in]{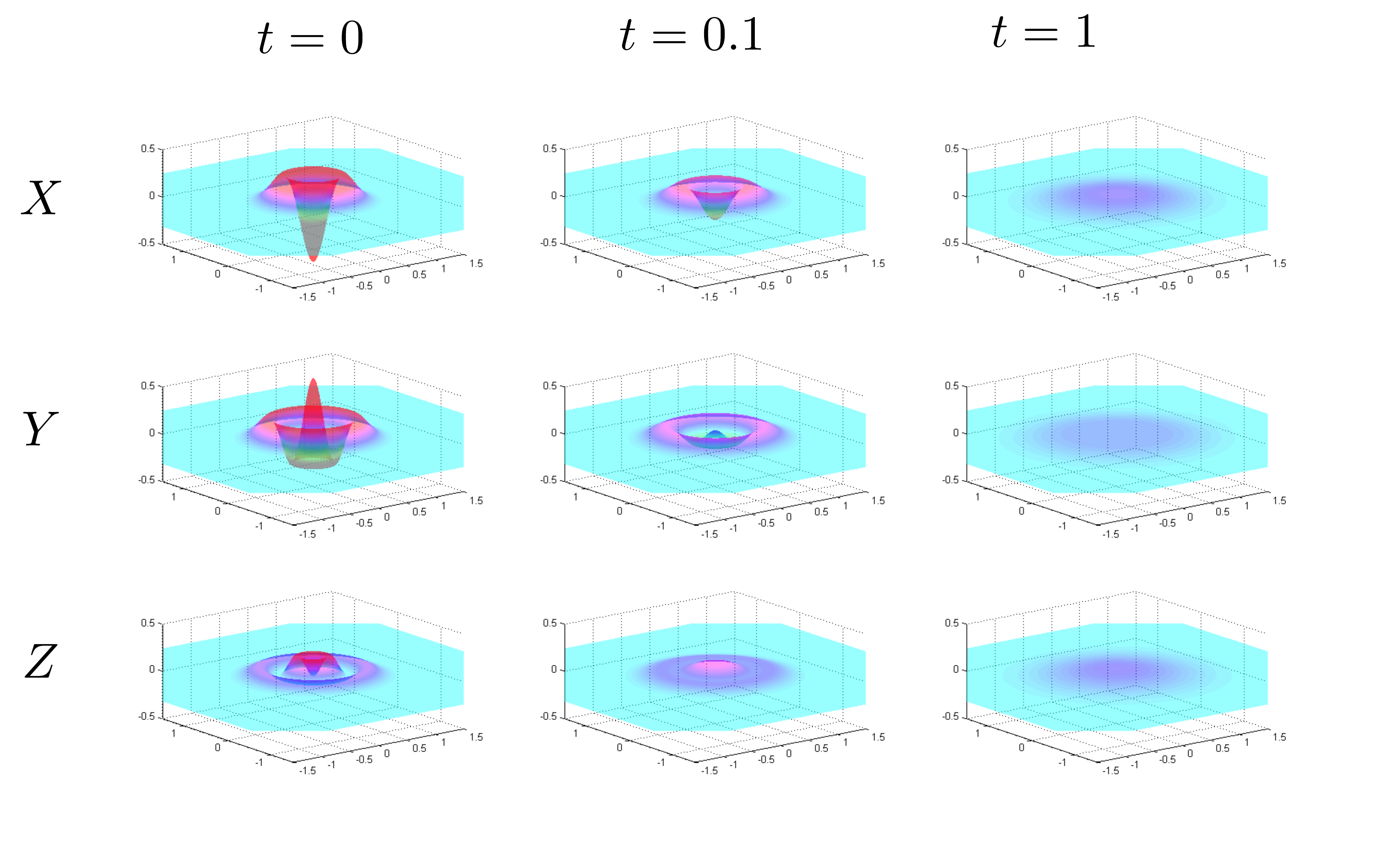}
\caption{Using the evolution of the inputs and output of a beam splitter under diffusion to prove the quantum entropy power inequality.  
The state of a bosonic system can be represented with a Wigner function,
which is the Fourier transform of the state's characteristic function $\chi_{\rho}(\xi) = \Tr \rho W_{\xi}$.  Here the displacement operator
is $W_{\xi} = e^{i\xi^T J R}$ with  $J =\binom{\,\, \,\,\,0 \, \,\,  1\,}{-1\,\,\,  0\,}^{\oplus n}$ and $R = (Q_1,P_1,\dots,Q_n,P_n)$ for
an $n$ mode system. The
Wigner function is a ``quasi-probability'' distribution, in that it integrates to 1 but may be negative. The left column plots, from 
top to bottom, the Wigner functions of an input mode $X$ prepared in a $1$ photon state, an input $Y$ prepared in a $2$ photon state, 
and the resulting output $Z$ when $X$ and $Y$ combine at a $50:50$ beam splitter.  The second column shows the same states when evolved according
to a quantum diffusion for a time $t=0.1$.  The third column shows the states after diffusion for time $t=1$.  In the late time limit, 
all three states approach the same thermal state, and so satisfy Eq.~(\ref{Eq:LiebQEPI}) with equality.  The convexity of Fisher information
can be used, together with the quantum de Bruijin identity, to show that any violation of Eq.~(\ref{Eq:LiebQEPI}) would be amplified under
the diffusion.  Since there is no violation for $t\rightarrow \infty$, there is therefore no violation at $t=0$.  As a result, no matter
what input states are chosen, we have Eq.~(\ref{Eq:LiebQEPI}).}
\label{Fig:Diffusion}
\end{figure}

Our path to the quantum entropy power inequality combines the quantum de Bruijin identity, 
Eq.~(\ref{Eq:QuantumDeBruijin}), with a convexity property of the quantum Fisher information.  In 
particular, we require that the Fisher information of the output of a beam splitter satisfy
\begin{align}\label{Eq:FIConvexity}
\tilde{J}(\rho^{X \boxplus_\lambda Y})\leq \lambda \tilde{J}(\rho^{X}) +(1-\lambda) \tilde{J}(\rho^{Y}).  
\end{align}
The proof of this relation relies on elementary properties of $\tilde{J}$, and follows the analogous classical proof \cite{Zamir98}.

Roughly speaking, Eq.~(\ref{Eq:LiebQEPI}) is proven by subjecting inputs $\rho_X$ and $\rho_Y$  to a 
quantum diffusion for time $t$.  As $t \rightarrow \infty$, both initial states approach a thermal state with average
photon number $(t-1)/2$, as does the combination of the two states at a beam splitter.  Since both inputs, as well as the 
beam splitter's output, approach the same state in the limit, Eq.~(\ref{Eq:LiebQEPI}) is satisfied with equality.  We then use 
Eq.~(\ref{Eq:FIConvexity}) together with the quantum de Bruijin identity to show that any violation of Eq.~(\ref{Eq:LiebQEPI})
would be amplified as $t$ grows.  Since in the limit $t\rightarrow \infty$ the violation is zero, we conclude that no such violation
exists.  This argument also applies to multi-mode systems, so Eq.~(\ref{Eq:LiebQEPI}) is true for these too.

The proof of Eq.~(\ref{Eq:ShannonQEPI}) is similar in spirit to our proof of Eq.~(\ref{Eq:LiebQEPI}) and 
Blachmann's proof \cite{Blachman65} of Eq.~(\ref{Eq:ShannonEPI}).  Rather than convexity, we use a quantum version of Stam's inequality:
\begin{align}
\frac{2}{\tilde{J}(\rho^{X\boxplus_{1/2}Y})} \geq \frac{1}{\tilde{J}(\rho^{X})}+\frac{1}{\tilde{J}(\rho^{Y})},
\end{align}  
and consider a ratio rather than a difference (see Methods section or \cite{KS12} for more details).

\section{Discussion/outlook}

Some authors have hoped the lower bound of Eq.~(\ref{Eq:Holevo}) is equal to the capacity \cite{Giovannettietal04,GSE08}; 
There is evidence both for \cite{Giovannettietal04,SEW05,GSE08,Guha08,Giovannetti10} and against \cite{Hastings09,SSY11} this conjecture.  
It has been related to an ``entropy photon-number inequality'' which, if true, 
would imply this equivalence,  but despite concerted effort no proof has been found.  Our quantum EPIs  
more closely resemble the classical inequalities  than does the proposed inequality of \cite{Giovannettietal04}, 
allowing us to often rely on classical proof strategies.  

We expect our results to find a variety of applications to bosonic systems.  The analysis of classical 
network models like broadcast \cite{Bergmans74} 
and interference channels \cite{Costa85a,Costa85b}
relies on EPIs, so network quantum information theory is a good place to start \cite{YHD11,FHSSW11,GSW11}.  
Quantum EPIs may also find applications in the development of noncommutative central limit theorems \cite{Hudson73,Barron86,Guha08}. 

There are many potential generalizations for our inequalities.  For example, one could follow Costa \cite{Costa85a} and 
show that $\exp\left[ \frac{1}{n}S(e^{t\cL}(\rho))\right]$ is concave as a function of $t$.  Foremost, however, is proving the 
analogue of Eq.~(\ref{Eq:ShannonQEPI}) for $\lambda \neq \frac{1}{2}$.  One would hope that

\begin{align}\label{Eq:ConjPower}
e^{\frac{1}{n}S(X\boxplus_\lambda Y)} \geq \lambda e^{\frac{1}{n}S(X)} + (1-\lambda)e^{\frac{1}{n}S(Y)},
\end{align}    

but we have not yet found a proof.  Such a result would give bounds on the capacity of the thermal and classical noise channels
to within $0.16$ bits, answering the capacity question for all practical purposes.

\section{Methods}

{\bf Details of proof of Eq.(\ref{Eq:LiebQEPI})}

We would like to show that, given input states $\rho_X$ and $\rho_Y$, 
\begin{align}\label{Eq:ExplicitQEPI}
S\left(\cB_\lambda (\rho_X \otimes \rho_Y)\right) \geq \lambda S\left(\rho_X\right)+ (1-\lambda) S\left(\rho_Y\right),
\end{align} 
where $\cB_\lambda (\rho_X \otimes \rho_Y)$ denotes the map from inputs to outputs of a beam splitter
with transmissivity $\lambda$.  To do so, we let 
\begin{align}
s(t)&=S\left(e^{t\cL}\left(\cB_\lambda (\rho_X \otimes \rho_Y)\right)\right)
 - \lambda S\left(e^{t\cL}\left(\rho_X\right)\right) - (1-\lambda) S\left(e^{t\cL}\left(\rho_Y\right)\right) 
\end{align}
be the difference between the two sides of the desired inequality.  Since as $t\rightarrow \infty$
all states involved approach a gaussian state with photon number $(t-1)/2$, one expects that 
$\lim_{t \rightarrow \infty}s(t) = 0$, and indeed this is the case \cite{KS12}.  Furthermore,
using the quantum de Bruijin identity to differentiate, we find
\begin{align}
s^{\prime}(t) = \tilde{J}\left(e^{t\cL}\left(\cB_\lambda (\rho_X \otimes \rho_Y)\right)\right) 
-\lambda \tilde{J}\left(e^{t\cL}\left(\rho_X\right)\right) -(1-\lambda) \tilde{J}\left(e^{t\cL}\left(\rho_Y\right)\right).
\end{align}
Finally, using the fact that 
\begin{align}
e^{t\cL}\left(\cB_\lambda (\rho_X \otimes \rho_Y)\right) = \cB_\lambda \left(e^{t\cL}(\rho_X) \otimes e^{t\cL}(\rho_Y)\right)
\end{align}
we find that 
\begin{align}
s^{\prime}(t) = \tilde{J}\left(\cB_\lambda \left(e^{t\cL}(\rho_X) \otimes e^{t\cL}(\rho_Y)\right)\right) 
-\lambda \tilde{J}\left(e^{t\cL}\left(\rho_X\right)\right) -(1-\lambda) \tilde{J}\left(e^{t\cL}\left(\rho_Y\right)\right)
\end{align}
so that by Eq.~(\ref{Eq:FIConvexity}), we have $s^\prime(t) \leq 0$.  Since $\lim_{t\rightarrow \infty}s(t)= 0$ \cite{KS12} and $s(t)$ is
monotonically decreasing, we thus find that $s(0)\geq 0$.  In other words, we get Eq.~(\ref{Eq:ExplicitQEPI}).

{\bf Proof sketch of Eq.(\ref{Eq:ShannonQEPI})}

As mentioned above, to establish Eq.~(\ref{Eq:ShannonQEPI}), rather than using convexity, we appeal to a quantum version of Stam's inequality:
\begin{align}\label{Eq:QuantumStam}
\frac{2}{\tilde{J}(\rho^{X\boxplus_{1/2}Y})} \geq \frac{1}{\tilde{J}(\rho^{X})}+\frac{1}{\tilde{J}(\rho^{Y})},
\end{align}  
whose proof along the lines of \cite{Zamir98} can be found in \cite{KS12}.  In fact, we let $X$ evolve according 
to a quantum diffusion for time $F(t)$ and $Y$ evolve for $G(t)$ with $\lim_{t\rightarrow \infty}F(t) = \lim_{t\rightarrow \infty}G(t) = \infty$.
Then, letting $\rho^{F(t)}_X = e^{F(t)\cL}(\rho_X)$, $\rho^{G(t)}_Y = e^{G(t)\cL}(\rho_Y)$ and 
$\rho^{\frac{1}{2}[F(t)+G(t)]}_{Z} = e^{\frac{1}{2}[F(t)+G(t)]\cL}(\rho_{X\boxplus_{1/2}Y})$, 
we can show that as $t\rightarrow \infty$ the ratio, 

\begin{align}
h(t)= \frac{\frac{1}{2}\exp\left(\frac{1}{n}S\left(\rho^{F(t)}_X\right)\right)+\frac{1}{2}\exp\left(\frac{1}{n}S\left(\rho^{G(t)}_Y\right)\right)}{\exp\left( \frac{1}{n}S\left(\rho^{\frac{1}{2}[F(t)+G(t)]}_{Z}\right)\right)}
\end{align}
approaches $1$.  Using the quantum de Bruijin identity to evaluate $h^\prime(t)$ allows us to find a differential equation 
for $F$ and $G$ that ensures, together with Eq.~(\ref{Eq:QuantumStam}), $h^\prime(t)\geq 0$.  This allows us to conclude that $h(0)\leq 1$, 
which implies Eq.~(\ref{Eq:ShannonQEPI}).

\section*{Acknowledgments}
We are grateful to  Charlie Bennett, Jay Gambetta, and John Smolin for helpful comments and advice,  
Saikat Guha for discussions of the entropy photon number inequality, and Mark Wilde for comments and suggesting references.  
We were both supported by DARPA QUEST program under contract no.HR0011-09-C-0047.

\appendix

\section{Implications for classical capacities}

\begin{lemma}
Measured in nats, the classical capacity of the single-mode thermal noise channel, $\cE_{\lambda,N_E}$, with average photon number $N$ satisfies
\begin{align}
C(\cE_{\lambda,N_E},N) & \leq g\left(\lambda N + (1-\lambda)N_E\right) -(1-\lambda)g(N_E).  
\end{align}
If $\lambda = \frac{1}{2}$, we also have
\begin{align}
C(\cE_{1/2,N_E},N) & \leq g\left(\frac{1}{2}(N +N_E)\right) -\ln \left( 1 + e^{g(N_E)}\right) +  \ln 2.
\end{align}
The capacity in bits can be obtained by dividing the formla for nats by $\ln 2$.
\end{lemma}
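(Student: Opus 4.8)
The plan is to combine the two entropy power inequalities with the general single-use capacity bound
\[
C(\cE,N) \leq S_{\rm max}(\cE,N) - \lim_{n\to\infty}\tfrac{1}{n}S_{\rm min}(\cE^{\otimes n})
\]
recalled earlier in the excerpt. Given this bound, proving the lemma reduces to two independent tasks: upper bounding the maximum output entropy $S_{\rm max}(\cE_{\lambda,N_E},N)$, and lower bounding the regularized minimum output entropy $\lim_n \tfrac{1}{n}S_{\rm min}(\cE^{\otimes n})$. The whole point is that our EPIs are additive, so they collapse the otherwise inaccessible regularized limit into a single-letter quantity.

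For the first task I would argue that, since the channel combines the input with a thermal environment of photon number $N_E$ at transmissivity $\lambda$ and the second moments add with the beam-splitter weights, an input of mean photon number at most $N$ yields an output of mean photon number at most $\lambda N + (1-\lambda)N_E$ (displacements can be dropped, as they cost photons without changing entropy). Because the thermal state maximizes von Neumann entropy at fixed photon number, this gives $S_{\rm max}(\cE_{\lambda,N_E},N) \leq g(\lambda N + (1-\lambda)N_E)$; subadditivity of entropy and concavity of $g$ reduce the multi-use maximization to this same single-letter value, which is precisely what the general bound already encodes.

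For the second task, the structural observation is that $\cE^{\otimes n}$ is realized by combining the (possibly entangled) $n$-mode input $\rho$ with the product environment $\rho_E^{\otimes n}$, whose entropy is $n\,g(N_E)$, at the $n$-mode beam splitter and discarding the reflected mode, so that the channel output equals $\rho \boxplus_\lambda \rho_E^{\otimes n}$. Applying Eq.~(\ref{Eq:LiebQEPI}), which holds verbatim for $n$-mode systems, yields
\[
S(\cE^{\otimes n}(\rho)) \geq \lambda S(\rho) + (1-\lambda)\,n\,g(N_E) \geq (1-\lambda)\,n\,g(N_E),
\]
using only $S(\rho)\geq 0$. Dividing by $n$ gives $\lim_n \tfrac{1}{n}S_{\rm min}(\cE^{\otimes n}) \geq (1-\lambda)g(N_E)$, and subtracting this from the $S_{\rm max}$ bound produces the first claimed inequality. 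For $\lambda = \tfrac{1}{2}$ I would instead invoke the stronger Eq.~(\ref{Eq:ShannonQEPI}): since $\tfrac{1}{n}S(\rho_E^{\otimes n}) = g(N_E)$, it gives $e^{\frac{1}{n}S(\cE^{\otimes n}(\rho))} \geq \tfrac{1}{2}e^{\frac{1}{n}S(\rho)} + \tfrac{1}{2}e^{g(N_E)} \geq \tfrac{1}{2}\bigl(1 + e^{g(N_E)}\bigr)$, hence $\tfrac{1}{n}S_{\rm min}(\cE^{\otimes n}) \geq \ln\bigl(1 + e^{g(N_E)}\bigr) - \ln 2$; combining with $g\bigl(\tfrac{1}{2}(N+N_E)\bigr)$ yields the second inequality.

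I expect the only substantive step to be the identification of $\cE^{\otimes n}$ with the $n$-mode beam splitter acting on a product thermal environment, since this is what lets the $n$-mode EPIs apply unchanged; once that is in place, the remainder is arithmetic together with the trivial bound $S(\rho)\geq 0$. In other words, there is no genuine obstacle here beyond bookkeeping, because the hard part---the additivity that single-letterizes $S_{\rm min}$---is already delivered by Eqs.~(\ref{Eq:LiebQEPI}) and~(\ref{Eq:ShannonQEPI}).
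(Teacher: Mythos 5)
Your proposal is correct and follows essentially the same route as the paper's own proof: both start from the HSW-derived bound $C \leq S_{\rm max} - \lim_n \frac{1}{n}S_{\rm min}(\cE^{\otimes n})$, bound the maximum output entropy by $g(\lambda N + (1-\lambda)N_E)$ via Gaussian extremality and subadditivity, and single-letterize the minimum output entropy by applying the $n$-mode EPIs to the output $\rho \boxplus_\lambda \rho_E^{\otimes n}$ together with the trivial bound $S(\rho)\geq 0$ (and, for $\lambda = \tfrac{1}{2}$, the exponentiated version). The only cosmetic difference is that the paper re-derives the general capacity bound from the HSW formula inside the proof, whereas you invoke it as already established.
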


\begin{proof}
We begin with the Holevo-Schumacher-Westmoreland formula for the classical capacity \cite{SW97,Holevo98} (see also \cite{YO93}), 
from which we conclude
\begin{align}
C(\cE_{\lambda,N_E},N) & = \lim_{n\rightarrow \infty} \frac{1}{n}\max_{\{p_x,\phi_x^n \Tr[\sum_x p_x \phi_x H_{(n)}]\leq (2N+1)n\}}
\left[S\left(\cE^{\otimes n}_{\lambda,N_E}\left(\sum_x p_x \phi_x^n\right) \right) - \sum_x p_x S\left(\cE^{\otimes n}_{\lambda,N_E}(\phi_x^n) \right)\right]\\
& \leq \lim_{n\rightarrow \infty} \frac{1}{n}\max_{\Tr[\phi_n H_{(n)}]\leq n(2N+1)  } \left[S\left(\cE^{\otimes n}_{\lambda,N_E}\left(\phi_n\right) \right)\right] - 
 \lim_{n\rightarrow \infty} \frac{1}{n}\min_{\phi_n}\left[S\left(\cE^{\otimes n}_{\lambda,N_E}\left(\phi_n\right) \right)\right]\\
& \leq \max_{\Tr[\phi H] \leq (2N+1)} \left[S\left(\cE_{\lambda,N_E}\left(\phi\right) \right)\right]
-\lim_{n\rightarrow \infty} \frac{1}{n}\min_{\phi_n}\left[S\left(\cE^{\otimes n}_{\lambda,N_E}\left(\phi_n\right) \right)\right]\\
& = g\left(\lambda N + (1-\lambda)N_E\right)-\lim_{n\rightarrow \infty} \frac{1}{n}\min_{\phi_n}\left[S\left(\cE^{\otimes n}_{\lambda,N_E}\left(\phi_n\right) \right)\right], \label{Eq:PartialUB}
\end{align}
where the first inequality is elementary, the second is due to subadditivity of entropy, and the final equality comes from the 
fact that gaussian states maximize entropy for any given power level \cite{Wolfetal06}.  We complete the prove by using 
Eq.~(\ref{Eq:LiebQEPI}) to show
\begin{align}
S\left(\cE^{\otimes n}_{\lambda,N_E}\left(\phi_n\right) \right) & \geq \lambda S(\phi_n) + (1-\lambda)S(E_{\rm in}^n)\\
& = \lambda S(\phi_n) +(1-\lambda)nS(E_{\rm in}) 
& \geq (1-\lambda)nS(E_{\rm in}) = n(1-\lambda)g(N_{E}), 
\end{align}
from which we conclude with Eq.~(\ref{Eq:PartialUB}) that 
\begin{align}
C(\cE_{\lambda,N_E}) & \leq g\left(\lambda N + (1-\lambda)N_E\right) -(1-\lambda)g(N_E).
\end{align} 

For $\lambda = \frac{1}{2}$, we also have
\begin{align}
e^{\frac{1}{n}S\left(\cE^{\otimes n}_{1/2,N_E}(\phi_n)\right)}& \geq \frac{1}{2}e^{\frac{1}{n}S(\phi_n)} + \frac{1}{2}e^{\frac{1}{n}S(E_{\rm in}^n)}
\end{align}
from Eq.~(\ref{Eq:ShannonQEPI}), which gives us 
\begin{align}
\frac{1}{n}S\left(\cE^{\otimes n}_{1/2,N_E}(\phi_n)\right) &\geq \ln \left[ 1 + e^{\frac{1}{n}S(E_{\rm in}^n)} \right] - \ln 2, 
\end{align}
so that 
\begin{align}
\frac{1}{n}S\left(\cE^{\otimes n}_{1/2,N_E}(\phi_n)\right) & \geq \ln\left[ 1 + e^{g(N_E)} \right] - \ln 2.
\end{align}
This allows us to conclude, together with Eq.~(\ref{Eq:PartialUB})

\begin{align}
C(\cE_{1/2,N_E},N) & \leq g\left(\frac{1}{2} \left(N +N_E \right)\right) - \ln\left[ 1 + e^{g(N_E)} \right] +\ln 2.
\end{align}
\end{proof}

In a similar fashion, we can prove that Eq.~(\ref{Eq:ConjPower}) would imply that 

\begin{align}\label{Eq:ConjThermalUB}
C(\cE_{\lambda,N_E}) & \leq g\left( \lambda N +(1-\lambda)N_E \right) - \ln\left[ \lambda + (1-\lambda)e^{g(N_E)} \right].
\end{align}

We also prove the following lemma, which shows an upper bound for the classical noise channel is implied by Eq.~(\ref{Eq:ConjPower}).

\begin{lemma}
If we have
\begin{align}
e^{\frac{1}{n}S(X\boxplus_\lambda Y)} \geq \lambda e^{\frac{1}{n}S(X)} + (1-\lambda)e^{\frac{1}{n}S(Y)},
\end{align}
then the classical capacity of the classical noise channel measured in nats satisfies 
\begin{align}
g(N+\nu)- g(\nu) \leq C(\cE_{\nu},N) \leq g(N+\nu)- \ln(1+e\nu).
\end{align}
The upper and lower bounds differ by no more than $0.11$ nats or $0.16$ bits.
\end{lemma}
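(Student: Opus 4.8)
The plan is to obtain the lower bound directly from the HSW theorem and the upper bound by degenerating the thermal-channel estimate implied by Eq.~(\ref{Eq:ConjPower}), under the limit $\lambda\to 1$, $N_E\to\infty$ with $(1-\lambda)N_E=\nu$ fixed that defines $\cE_\nu$ (Fig.~\ref{Fig:ThermalDefinition}). For the lower bound I apply HSW on $\cE_\nu$ with a coherent-state ensemble modulated by a gaussian of mean photon number $N$: each coherent input is mapped to a displaced thermal state of mean photon number $\nu$ (entropy $g(\nu)$), while the ensemble average is a thermal state of mean photon number $N+\nu$ (entropy $g(N+\nu)$). Hence the Holevo information, and therefore the capacity, is at least $g(N+\nu)-g(\nu)$; this is also the $\lambda\to 1$, $N_E\to\infty$, $(1-\lambda)N_E=\nu$ limit of Eq.~(\ref{Eq:Holevo}).

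For the upper bound I first reproduce the proof of the first Lemma, replacing the proven inequalities by Eq.~(\ref{Eq:ConjPower}). The HSW formula together with subadditivity and the fact that gaussians maximize entropy at fixed energy gives $C(\cE_{\lambda,N_E},N)\leq g(\lambda N+(1-\lambda)N_E)-\lim_{n}\frac1n\min_{\phi_n}S(\cE^{\otimes n}_{\lambda,N_E}(\phi_n))$. Applying Eq.~(\ref{Eq:ConjPower}) to $\phi_n$ and the $n$-fold thermal environment $E^n_{\rm in}$, with $S(E^n_{\rm in})=n\,g(N_E)$ and $S(\phi_n)\geq 0$, yields
\begin{align}
e^{\frac1n S(\cE^{\otimes n}_{\lambda,N_E}(\phi_n))}\geq\lambda\,e^{\frac1n S(\phi_n)}+(1-\lambda)e^{g(N_E)}\geq\lambda+(1-\lambda)e^{g(N_E)},
\end{align}
so that $\frac1n S(\cE^{\otimes n}_{\lambda,N_E}(\phi_n))\geq\ln[\lambda+(1-\lambda)e^{g(N_E)}]$, which is Eq.~(\ref{Eq:ConjThermalUB}). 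I then pass to the limit: the key estimate is the large-$N_E$ asymptotics $g(N_E)=\ln N_E+1+O(1/N_E)$, giving $e^{g(N_E)}=e\,N_E(1+o(1))$, hence $(1-\lambda)e^{g(N_E)}\to e\,(1-\lambda)N_E=e\nu$ while $\lambda\to 1$. Therefore $\ln[\lambda+(1-\lambda)e^{g(N_E)}]\to\ln(1+e\nu)$ and $g(\lambda N+(1-\lambda)N_E)\to g(N+\nu)$, so $C(\cE_\nu,N)\leq g(N+\nu)-\ln(1+e\nu)$.

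Subtracting the two bounds leaves the $N$-independent gap $g(\nu)-\ln(1+e\nu)$. By the same asymptotics this tends to $0$ as $\nu\to\infty$, and it vanishes at $\nu=0$; its derivative $\ln(1+1/\nu)-e/(1+e\nu)$ has a single interior zero near $\nu\approx0.24$, where the gap reaches $\approx0.107$ nats $\approx0.154$ bits, so a one-variable numerical maximization gives the claimed $0.11$ nats (equivalently $0.16$ bits). The main obstacle is not the algebra but justifying that the estimates genuinely survive the degeneration — that $C$, $S_{\max}$ and the regularized $S_{\min}$ for $\cE_{\lambda,N_E}$ converge to their $\cE_\nu$ counterparts as the channels converge on energy-bounded states — together with the exact control of the $g(N_E)$ asymptotics needed to pin the constant $e$ in $e\nu$; everything else is the routine analogue of the first Lemma and the numerical gap estimate.
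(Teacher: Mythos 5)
Your proposal is correct and follows essentially the same route as the paper: the lower bound is the known Holevo rate from \cite{HW01}, and the upper bound comes from establishing Eq.~(\ref{Eq:ConjThermalUB}) via the argument of the first lemma with Eq.~(\ref{Eq:ConjPower}) in place of the proven inequalities, then taking the limit $\lambda\to 1$, $N_E\to\infty$ with $(1-\lambda)N_E=\nu$, where your asymptotic expansion $e^{g(N_E)}=e\,N_E(1+o(1))$ matches the paper's exact evaluation of $(1-\lambda)(N_E+1)\left(\frac{N_E+1}{N_E}\right)^{N_E}\to \nu e$. Your explicit maximization of the gap $g(\nu)-\ln(1+e\nu)$ (maximum $\approx 0.107$ nats near $\nu\approx 0.24$) supplies the calculus the paper only asserts, and the continuity-under-degeneration caveat you raise is likewise left implicit in the paper's own proof.
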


\begin{proof}
The lower bound comes from \cite{HW01}, so we need only establish the upper bound.  This is done by evaluating the bound in 
Eq.~(\ref{Eq:ConjThermalUB}), 
\begin{align}
C(\cE_{\lambda,N_E},N) & \leq g\left( \lambda N +(1-\lambda)N_E \right) - \ln\left[ \lambda + (1-\lambda)e^{g(N_E)} \right],
\end{align}
in the limit $\lambda \rightarrow 1$, $N_E \rightarrow \infty$ with $(1-\lambda)N_E = \nu$.  Since

\begin{align}
(1-\lambda)e^{g(N_E)} & = (1-\lambda)\frac{\left(N_E+1\right)^{N_E+1}}{N_E^{N_E}}\\
& = (1-\lambda)(N_E+1)\left(\frac{N_E+1}{N_E}\right)^{N_E}\\
 &= \nu e,
\end{align}

we find 
\begin{align}
C(\cE_{\nu} ,N) \leq g(N+\nu)- \ln(1+e\nu).
\end{align}

That the gap between upper and lower bounds is no more than $0.11$ can easily be established by calculus.

\end{proof}

\begin{lemma}

Measured in nats, the classical capacity of the single-mode thermal noise channel, $\cE_{\lambda,N_E}$, with average photon number $N$ satisfies
\begin{align}
C(\cE_{\lambda,N_E}) \leq \left[g\left(\frac{\lambda N}{(1-\lambda)N_E +1}\right)\right].
\end{align}
\end{lemma}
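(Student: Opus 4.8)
The plan is to exhibit a pure-loss channel as an \emph{additive extension} of the thermal noise channel and then invoke data processing. The anchor is that the classical capacity of the pure-loss channel $\cE_{\eta,0}$ with mean photon number $N$ is known exactly, $C(\cE_{\eta,0},N)=g(\eta N)$ \cite{GGLMSY04}; this is the single computable quantity I will compare against. So everything reduces to finding the right effective transmissivity $\eta$ and showing that $\cE_{\lambda,N_E}$ is a post-processing of $\cE_{\eta,0}$.

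First I would establish the decomposition
\begin{align}
\cE_{\lambda,N_E}=\cA_G\circ\cE_{\eta,0},\qquad \eta=\frac{\lambda}{(1-\lambda)N_E+1},\qquad G=(1-\lambda)N_E+1,
\end{align}
where $\cA_G$ is the quantum-limited (minimum-noise) phase-insensitive amplifier of gain $G$. Since both sides are phase-insensitive gaussian channels, it suffices to match their action on the quadrature covariances: the attenuator $\cE_{\eta,0}$ scales each quadrature variance by $\eta$ and adds $(1-\eta)$ units of vacuum, while $\cA_G$ scales by $G$ and adds $(G-1)$ units. Composing multiplies the scalings to $\eta G=\lambda$ and, tracking the added noise, reproduces exactly the output noise $(1-\lambda)(2N_E+1)$ of $\cE_{\lambda,N_E}$. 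The one point needing care is that $\cA_G$ be a legitimate channel: I would verify that the noise it adds, $G-1=(1-\lambda)N_E$, saturates but does not violate the amplifier's quantum limit, so that $\cA_G$ is CPTP. Because $\eta\leq\lambda\leq 1$ and $G\geq 1$, both factors are valid, and this is just the Caruso--Giovannetti--Holevo canonical attenuator-then-amplifier form specialized to these parameters.

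With the decomposition in hand the bound is immediate. The amplifier $\cA_G$ acts entirely at the output, so any code for $\cE_{\lambda,N_E}$ is also a code for $\cE_{\eta,0}$ in which the receiver simply applies $\cA_G$ before decoding; hence post-composition with a channel cannot raise the capacity, giving $C(\cE_{\lambda,N_E},N)\leq C(\cE_{\eta,0},N)$. Crucially, the mean-photon-number constraint is borne by the input to $\cE_{\eta,0}$, which is the common input of both channels, so the constraint $N$ transfers unchanged. Combining with the known pure-loss capacity yields
\begin{align}
C(\cE_{\lambda,N_E},N)\leq C(\cE_{\eta,0},N)=g(\eta N)=g\!\left(\frac{\lambda N}{(1-\lambda)N_E+1}\right),
\end{align}
as claimed. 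The main obstacle is really just the bookkeeping of the second step: confirming that the gaussian parameters compose correctly and that the amplifier sits exactly at the quantum limit, so that $\cE_{\eta,0}$ genuinely is an additive extension rather than an over-optimistic relaxation.
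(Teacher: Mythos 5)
Your proof is correct and follows essentially the same route as the paper: the identical decomposition $\cE_{\lambda,N_E}=\cA_G\circ\cE_{a,0}$ with $G=(1-\lambda)N_E+1$ and $a=\lambda/((1-\lambda)N_E+1)$, followed by data processing and the known pure-loss capacity $g(aN)$ from \cite{GGLMSY04}. Your explicit verification of the covariance-matrix composition and of the amplifier's quantum limit simply fills in details the paper leaves implicit.
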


\begin{proof}
First, we let $\cA_{G}$ be a pure-gain quantum channel with gain $G$, mapping covariance matrix $\gamma\rightarrow G\gamma + (G-1)I$.
Then, note that $\cE_{\lambda,N_E} = \cA_{G} \circ \cE_{a,0}$
with $G = (1-\lambda)N_E+1$, $a  = \frac{\lambda}{(1-\lambda)N_E+1}$.  Since  the capacity of $\cE_{a,0}$ is known to be $g(aN)/\ln 2$ \cite{GGLMSY04}, 
we have
\begin{align}
C(\cE_{\lambda,N_E},N)  = C\left(\cA_{G} \circ \cE_{a,0},N\right) \leq C(\cE_{a,0},N) =  \left[g\left(\frac{\lambda N}{(1-\lambda)N_E +1}\right)\right]\frac{1}{\ln 2}.
\end{align}

\end{proof}

\bibliographystyle{apsrev}

\end{document}